\newcounter{theorem}
\newtheorem{Theorem}{Theorem}[section]
\newtheorem{lemma}[theorem]{Lemma}
\newtheorem{proposition}{Proposition}
\theoremstyle{definition} 
\newtheorem{definition}[theorem]{Definition}
\newtheorem{remark}{Remark}
\newcommand{\be}{\begin{equation}}
\newcommand{\ee}{\end{equation}}
\newcommand{\bea}{\begin{eqnarray}}
\newcommand{\eea}{\end{eqnarray}}
\newcommand{\bd}{\begin{displaymath}}
\newcommand{\ed}{\end{displaymath}}
\begin{document}

\centerline{\Large \bf On   $q^2$-trigonometric functions   and their   $q^2$-Fourier   transform}

\vskip 0.20cm

\centerline{Sama Arjika}

\vskip 0.20cm

\centerline{${}^1$Department of Mathematics and Informatics,   University of Agadez}
\centerline{ and }
\centerline{${}^2$International Chair of Mathematical Physics and Applications, (ICMPA-UNESCO Chair)}
\centerline{ University of Abomey-Calavi, 072 B. P.: 50 Cotonou, Benin}

\begin{abstract}
In this paper, we first construct    generalized $q^2$-cosine,  $q^2$-sine and $q^2$-exponential functions.   We then use   $q^2$-exponential function in order to define and investigate a $q^2$-Fourier transform.  We establish  $q$-analogues of  inversion   and   Plancherel theorems.
\end{abstract}

\noindent
{\it Keywords: }$q$-Bessel function, $q$-trigonometric function, $q^2$-Fourier transform, inversion theorem, Plancherel theorem.\\
\noindent


\section{Introduction}

During the last years, an intensive work was founded about the so-called $q$-basic theory.  Taking account of the well-known Ramanujan works shown at the beginning of this century by Jackson [1,2], many authors such as Askey, Gasper,  Rogers, Andrews, Koornwinder, Ismail, Srivastava, and others (see references) have recently developed this topis. 

The present article is devoted  to extend the study of the  $q^2$-analogue of the Fourier transforms. The method used here differs from those given by  Richard  [3].  We take as definition a general form of $q^2$-cosine  [4] 
\bea
\label{qalpha}
\cos(x;q^2) =\sum_{k=0}^{\infty} \frac{(-1)^k\;q^{k(k+1) }\,x^{2k}}{(q;q)_{2k}} 
  =\frac{(q^2;q^2)_\infty}{(q;q^2 )_\infty}\,x^{1/2}\,J_{-1/2}(x;q^2)
\eea
 and $q^2$-sine  [4]
\bea
\label{qoalpha}
 \sin(x;q^2)= \sum_{k=0}^{\infty} \frac{(-1)^k\;q^{k(k+1) }\; x^{2k+1}}{(q;q)_{2k+1}} 
 = \frac{(q^2;q^2)_\infty}{(q;q^2 )_\infty}\,x^{1/2}\,J_{1/2 }(x;q^2),
\eea
where  $J_\alpha(x;q^2)$ the  $q^2$-analogue of the Bessel function  [5,6] 
\bea
\label{Lag}
 J_{\nu }(x;q)=\frac{(q^{\nu +1};q )_\infty}{(q;q)_\infty}x^\nu
 {}_1\phi_{1}\left(\begin{array}{c}0\\
q^{ \nu+1}\end{array}\Big|\;q;\;q^2x^2\right).
\eea

  The $q^2$-analogue Bessel functions and closely related variants have received 
much attention because of their importance in the study of
 $q$-analogues of  representations of the Group of Plane Motions and of the Quantum
Group of Plane Motions, q-differential equations, and other topics.  For more details, see [7-11]. 

Our   aim in this paper is to give an extension of  $q^2$-analogue trigonometric functions  $\cos(x;q^2),\,\sin(x;q^2)$ and $q^2$-analogue exponential   function  $e(x;q^2)$  [4]. We then  study   generalized $q^2$-Fourier transform and give the $q$-analogues of  inversion   and   Plancherel theorems.  

The paper is organized as follows: In Section 2, we 
 give notations and preliminaries to be used in the sequel. In  
  Section 3, we define generalized $q^2$-cosine, $q^2$-sine and $q^2$-exponential functions and study some of their properties. We  give   $q$-analogues of  inversion   and   Plancherel theorems.   We end with concluding remarks in Section 4.
\section{ Notations and preliminaries}

Throughout this paper, we assume that $0<q<1,\;\alpha>-1$  and we write  $\mathbb{R}_{q,+}=\{ q^{n}, n \in \mathbb{Z}\}$. 
We  follow the notations and terminology in  [12-14]. The basic hypergeometric series $ {}_r\phi_s$
\bea
  {}_r\phi_s\left(\begin{array}{c}
a_1, a_2, \cdots, a_r\\
b_1, b_2, \cdots, b_s \end{array}\Big| q;x\right)=  
\sum_{k=0}^\infty\left[(-1)^kq^{k(k-1)/2}\right]^{1+s-r} \frac{(a_1, a_2, \cdots, a_r;q)_k}{(b_1, b_2, \cdots, b_s;q)_k}\frac{x^k}{(q;q)_k},
\eea
converges absolutely for all $x$ if $r\leq s$ and for $|x|<1$ if $r = s + 1$ and for terminating. The compact factorials of $ {}_r\phi_s$ are defined, respectively, by:
\bea
 (a;q)_{0}=1,\  (a;q)_{n}=\prod_{k=0}^{n-1}(1-aq^{k}), 
   (a;q)_{\infty}=\prod_{k=0}^{\infty}(1-aq^{k})
\eea
and
\bea
  (a_1, a_2, \cdots, a_m;q)_n 
  =(a_1;q)_n(a_2;q)_n\cdots(a_m;q)_n, 
\eea
 where $m\in\mathbb{N}:=\{1, 2, \cdots\}$ and $n\in\mathbb{N}_0=\mathbb{N}\cup\{0\}.$

For a complex number $x$ and $n\in \mathbb{N}$,
the $q$-numbers and the $q$-factorials are defined as follows:
\begin {equation*}
[ x]_{q}={{1-q^x}\over{1-q}}, ~   [n]_q! =\prod_{k=1}^n[k]_q,   ~[0]_{q}=1.
\end {equation*}
For $\alpha>-1$, we define the generalized $q$-integers by [15] 
\bea\label{nfactalpha}
[2n]_{q,\alpha}=[2n+2\alpha+1]_{q}, 
[2n+1]_{q,\alpha}=[2n+2\alpha+2]_{q}
\eea
and the generalized $q$-shifted factorials by
\be
\label{qqnalpha}
(q;q)_{n,\alpha}:=(1-q)^n[n]_{q,\alpha}!.
\ee
Remark that,  we can  rewrite (\ref{qqnalpha}) as
$$ 
(q;q)_{2n,\alpha}=(q^2;q^2)_{n}(q^{2\alpha+2};q^2)_{n}
$$
and
$$
(q;q)_{2n+1,\alpha}=(q^2;q^2)_{n}(q^{2\alpha+2};q^2)_{n+1}.
$$
By means of (\ref{nfactalpha}), we may express the generalized $q$-factorials  as
\bea 
[2n]_{q,\alpha}!= \frac{\Gamma_{q^2}(\alpha+n+1)\Gamma_{q^2}(n+1)}{(1+q)^{-2n}\Gamma_{q^2}(\alpha+1)} \eea
and
\bea
[2n+1]_{q,\alpha}!= \frac{(1+q)^{2n+1}\Gamma_{q^2}(\alpha+n+2)\Gamma_{q^2}(n+1)}{\Gamma_{q^2}(\alpha+1)}, 
\eea 
where $\Gamma _q$ is the $q$-Gamma function  given by [12]
\bea 
\Gamma _q(z)=\frac {(q;q)_\infty }{(q^z;q)_\infty }(1-q)^{1-z},  \nonumber
\eea
and tends to $\Gamma (z) $ when $q$ tends to $1^{-}$.
In particular, we have the limits
\bea \label{limitefac}
 \lim_{q\rightarrow1^-}[2n]_{q,\alpha}!=  \frac{2^{2n}n!\Gamma(\alpha+n+1)}{\Gamma(\alpha+1)}=  2^{2n}n! (\alpha+1)_n,  \eea
and
\bea
 \lim_{q\rightarrow1^-}[2n+1]_{q,\alpha}!=\frac{n!\Gamma(\alpha+n+2)}{2^{-2n-1}\Gamma(\alpha+1)} 
= 2^{2n+1}n! (\alpha +1)_{n+1}, 
\eea
where $(a)_k=a(a+1)\cdots(a+k-1)$ is  the Pochhammer-symbol [12,16]. 
\\
Remark that, for  $\alpha=-\frac{1}{2}$, we get:
\be
(q;q)_{n,-\frac{1}{2}}= (q;q)_{n} \;[n]_{q,-\frac{1}{2}}!=[n]_{q}!.  
\ee
The $q$-Jackson integrals    from  $0$ to $+\infty$ and  from $-\infty$ to $+\infty$  are defined by [1]
\be
\label{eq:integral}
\int_{0}^{\infty}{f(x)d_qx}
=(1-q)\sum_{n=-\infty}^{\infty}q^n f(q^n)
\ee
and
\bea
\label{eqintegral}
 \int_{-\infty}^{\infty}{f(x)d_qx} 
 =(1-q)\sum_{n=-\infty}^{\infty}q^n[ f(q^n)+
 f(-q^n)], 
\eea
provided the sums converge absolutely.
\section{Results and Discussion}
In this section, we define and study   generalized $q^2$-trigonometric functions.  We then introduce a $q$-Fourier transform  that formally tends to its classical analogue as $\alpha=-1/2$, $q\to1^{-}$ and study some of its properties.  

\subsection{Generalized $q^2$-analogue trigonometric  functions}

We recall that the $q^2$-analogue exponential function $e(x;q^2)$ is defined in [3]     by
\bea
\label{xpoalpha}
e(x;q^2) =   \cos(-ix;q^2) 
    + i\sin(-ix;q^2).
\eea
By means of   generalized $q$-shifted factorials $(q;q)_{2n,\alpha}$ and $(q;q)_{2n+1,\alpha}$,   we define     generalized  $q^2$-cosine and $q^2$-sine    as follows:
\begin{definition} 
   For $x\in\mathbb{C}$ and $\alpha>-1$,  the generalized $q^2$-cosine and $q^2$-sine  are defined by:
\bea  
\cos_{\alpha}(x;q^2):=  \sum_{k=0}^{\infty} (-1)^k  c_{k,\alpha}(x;q^2)
={}_1\phi_{1}\left(\begin{array}{c}0\\
q^{2\alpha+2}\end{array}\Big|\;q^2; \;q^2x^2\right) \label{q-Expo-alpha} 
\eea 
and
\bea
\sin_{\alpha}(x;q^2):= \sum_{k=0}^{\infty}(-1)^k  s_{k,\alpha}(x;q^2) \label{q-expoalpha}
=\frac{x}{1-q^{2\alpha+2}}\,{}_1\phi_{1}\left(\begin{array}{c}0\\
q^{2\alpha+4}\end{array}\Big|\;q^2; \;q^2x^2\right)  
\eea
where we have put
$$
c_{k,\alpha}(x;q^2) =  \frac{ q^{k(k+1) }\,x^{2k}}{(q^{2\alpha+2},q^2;q^2)_{k}}
$$
and
$$
s_{k,\alpha}(x;q^2) = \frac{ q^{k(k+1) }\,x^{2k+1}}{(q^{2\alpha+2};q^2)_{k+1}(q^2;q^2)_k}.
$$
\end{definition}
By means of   (\ref{q-Expo-alpha}) and (\ref{q-expoalpha}), we define  generalized $q^2$-analogue exponential function $e_{\alpha}(x;q^2)$ by
\bea 
\label{expoalpha}
e_{\alpha}(x;q^2):= \cos_{\alpha}(-ix;q^2) 
+i\sin_{\alpha}(-ix;q^2).
\eea 
\begin{remark}
$\,$
\begin{enumerate}
\item   Compared with $ \cos(x;q^2),\,\sin(x;q^2)$ and $e(x;q^2)$, the generalized $q^2$-cosine and $q^2$-sine and exponential functions   $\cos_{\alpha}(x;q^2),\,\sin_{\alpha}(x;q^2)$ and  $e_{\alpha}(x;q^2)$, respectively,   involve two   parameters  ``$q$" and ``$\alpha$". Clearly,  $\cos(x;q^2),\,\sin(x;q^2)$ and  $e(x;q^2)$   can be considered as a special case  of (\ref{q-Expo-alpha}), (\ref{q-expoalpha})  and (\ref{expoalpha}), respectively. For  $\alpha =-1/2,$ we have:
\bea
\cos_{-\frac{1}{2}}(x;q^2)=\cos(x;q^2), 
 \sin_{-\frac{1}{2}}(x;q^2)=\sin(x;q^2), 
 e_{-\frac{1}{2}}(x;q^2)=e(x;q^2). 
\eea
\item  The relation between   generalized $q^2$-cosine and $q^2$-sine    functions 
 and the classical hypergeometric functions is ba-sed on observations such as
\bea
\lim_{q\to 1^{-}} \cos_{\alpha}((1-q)x;q^2) 
={}_0F_{1}\left(\begin{array}{c}-\\
 \alpha+1 \end{array}\Big|-\frac{x^2}{4}\right) 
\eea
and
\bea
 \lim_{q\to 1^{-}} \sin_\alpha((1-q)x;q^2) = \frac{x}{2(\alpha+1)}\,{}_0F_{1}\left(\begin{array}{c}-\\
 \alpha+ 2\end{array}\Big|-\frac{x^2}{4}\right). 
\eea
\item  For $\alpha=-1/2$, we have: 
\bea
  \lim_{q\to 1^{-}} \cos_{-\frac{1}{2}}((1-q)x;q^2)=\cos x, 
 \lim_{q\to 1^{-}} \sin_{-\frac{1}{2}}((1-q)x;q^2)=\sin x, 
 \lim_{q\to 1^{-}}e_{-\frac{1}{2}}((1-q)x;q^2)=e^x.\nonumber
\eea
\end{enumerate}
The   generalized $q^2$-exponential function $e_\alpha(x;q^2)$ is   absolutely convergent for all $x$  in the plane, $0 < q <1$, since both generalized $q^2$-cosine and $q^2$-sine are absolutely convergent for all $x$  in the plane, $0 < q <1$. 
\end{remark}

We introduce   generalized $q$-differential  operator as
\bea\label{D_qalpha}
 \partial_{q,\alpha}f(x)&=  & \frac{f(q^{-1}x)+f(-q^{-1}x)-q^{2\alpha+1}[f( x)+f(-x)]  }{2(1-q)x} \cr
& +&  \,  \frac{f(x)-f(-x) - q^{2\alpha+1}[f(q x)-f(-q x)]   }{2(1-q)x},   \,  x\neq0 
\eea
and 
$$ 
 \partial_{q,\alpha}f(0)=\lim_{x\to 0}\partial_{q,\alpha}f(x)=[2\alpha+2]_qf'(0), $$ provided that $ f'(0)$ exists.\\
We notice   if f is differentiable at $x$   $$ \lim_{q\to 1^-}\partial_{q,\alpha}f(x)=f'(x).$$
Observe that, $\alpha=-1/2$ in (\ref{D_qalpha}) corresponds to  the  $ q^2$-analogue differential operator [3], i.e.,    $\displaystyle   \partial_{q,-1/2}f(x)=  \partial_{q}f(x)$, 
and  
$\displaystyle 
 \partial_{q,-1/2}f(0)=f'(0).$\\
For all function $f$ on $\mathbb{R}_{q,+}$,  we have:
\bea
\label{dpalpha}
 \partial_{q,\alpha}f(x)= \frac{f_e(q^{-1}x)  -q^{2\alpha+1}f_e( x) }{(1-q)x} 
 + \,\frac{f_o(x)-q^{2\alpha+1}f_o(qx)   }{(1-q)x}, 
\eea
where  $ f_e$ and $f_o$ are  respectively,  even and odd parts of f.
Since we have a realization of the generalized $q$-differential operator $ \partial_{q,\alpha}$ (\ref{dpalpha}),  we have:
\bea \label{aser}
  \partial_{q,\alpha}\cos_\alpha((1-q)xt;q^2) =- t\sin_\alpha((1-q)xt;q^2),
\eea
\bea 
\label{asert}
 \partial_{q,\alpha}\sin_\alpha((1-q)xt;q^2)=t\cos_\alpha((1-q)xt;q^2)
\eea
and
 \bea 
\label{asertt}
 \partial_{q,\alpha}\,e_\alpha((1-q)xt;q^2)=te_\alpha((1-q)xt;q^2).
\eea
\subsection{Generalized $q^2$- Fourier transform}
The  goal is now to define a generalized $q^2$-deformed Fourier transform  that formally tends to its classical analogue as $\alpha=-1/2$ and $q\to1^{-}.$ 

  For $1\leq p <\infty$, we denote by $\displaystyle  L_{\alpha ,q}^p(\mathbb{R}_{q,+})$ the space of complex-valued functions $f$ on $\mathbb{R}_{q,+}$ such that:
\bea
\|f\|_{q,\alpha,p}=
\left(\int_{-\infty}^{\infty}|f(x)|^p|x|^{2\alpha +1}d_qx\right)^{\frac{1}{p}}<\infty 
\eea
and for $p=\infty$, we denote by $\displaystyle L_{\alpha ,q}^\infty(\mathbb{R}_{q,+})$  the space of complex-valued
functions $f$ on $\mathbb{R}_{q,+}$  such that
$$
\|f\|_{q,\alpha,\infty}= \sup_{x\in \mathbb{R}_{q,+}}\{|f(x)||x|^{2\alpha+1}\}<\infty.
$$
The generalized $q^2$-Fourier transform  will now be defined.
\begin{definition}
Let $f$ be a function in the space  
$\displaystyle L_{\alpha,q}^1(\mathbb{R}_{q,+}).$  The generalized $q^2$-Fourier transform is defined  by:
\bea
\hat{f}(x;q^2):= C_{\alpha,q}  \int_{-\infty}^{\infty} f(t)\;e_{\alpha}(-i(1-q)tx;q^2)
  |t|^{2\alpha+1}\,d_qt, 
\eea
where $\displaystyle   C_{\alpha,q}=\frac{(1-q)^\alpha\left(q^{2\alpha+2};q^2\right)_\infty}{2 \,\left(q^{2};q^2\right)_\infty}.$
 \end{definition}

For $\alpha=-1/2$ and letting $q\uparrow 1$ subject to the condition
\be
\label{q}
\frac{Log(1-q)}{Log(q)}\in 2\mathbb{Z},
\ee
gives, at least formally, the classical Fourier transform.  In the remainder of this paper, we assume that the condition (\ref{q})
holds.

The following Lemma will be used to prove the inversion Theorem.  
\begin{lemma}
\label{Lemma 3.1.}
Let $f\in L_{\alpha,q}^1(d_qt)$. If $\displaystyle \int_{-\infty}^{\infty}f(x)|x|^{2\alpha +1}  d_qx$ exists, with $(-1)^{2\alpha+1}=1,$  then:\\
$\bullet$ $f$ odd implies  $\displaystyle \int_{-\infty}^{\infty}f(x)|x|^{2\alpha +1} d_qx=0;$\\
$\bullet$ $f$ even  implies  $\displaystyle \int_{-\infty}^{\infty}f(x)\,|x|^{2\alpha +1} d_qx=2  \int_{0}^{\infty}f(x) \,x^{2\alpha +1}d_qx$.
\end{lemma}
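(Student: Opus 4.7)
The plan is to unfold both sides via the definition of the Jackson integral and reduce the identity to a termwise statement about the sequences $\{f(q^n)\}$ and $\{f(-q^n)\}$; the parity of $f$ then produces either cancellation (odd case) or doubling (even case).

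Specifically, applying definition (\ref{eqintegral}) to the integrand $f(x)|x|^{2\alpha+1}$ rewrites the left-hand side as
\begin{equation*}
(1-q)\sum_{n\in\mathbb{Z}} q^n\bigl[f(q^n)|q^n|^{2\alpha+1} + f(-q^n)|{-}q^n|^{2\alpha+1}\bigr].
\end{equation*}
Since $0<q<1$ we have $|{\pm} q^n|=q^n$, so the bracket collapses to $q^{n(2\alpha+1)}[f(q^n)+f(-q^n)]$. If $f$ is odd the bracket vanishes termwise and the whole sum is zero. If $f$ is even the bracket equals $2\,q^{n(2\alpha+1)}f(q^n)$, and combining with the outer $q^n$ gives
\begin{equation*}
2(1-q)\sum_{n\in\mathbb{Z}} q^n\, q^{n(2\alpha+1)} f(q^n),
\end{equation*}
which is exactly $\displaystyle 2\int_0^{\infty} f(x)\,x^{2\alpha+1}\,d_q x$ by (\ref{eq:integral}).

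The only subtlety is legitimising the termwise manipulation of the bi-infinite series, but this is immediate from the integrability hypothesis $f\in L_{\alpha,q}^{1}(\mathbb{R}_{q,+})$, which guarantees absolute convergence and hence permits arbitrary rearrangement. The condition $(-1)^{2\alpha+1}=1$ is essentially cosmetic at this stage, since the weight $|x|^{2\alpha+1}$ is already symmetric under $x\mapsto -x$; however, it is what allows one to identify $|x|^{2\alpha+1}$ with $x^{2\alpha+1}$ on $-\mathbb{R}_{q,+}$, which will matter in the Fourier-inversion application that the lemma is designed to serve. I do not foresee any genuine obstacle: the proof is a one-line unwinding of the Jackson integral followed by a parity split.
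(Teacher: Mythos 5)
Your argument is correct: the paper states this lemma without proof (it is treated as an immediate consequence of the definition of the two-sided Jackson integral), and your unwinding of definition (\ref{eqintegral}) followed by the parity split is exactly the intended one-line justification. Your remarks on absolute convergence and on the role of the hypothesis $(-1)^{2\alpha+1}=1$ (redundant for the weight $|x|^{2\alpha+1}$ as literally written, but needed to identify it with $x^{2\alpha+1}$ on the negative half-lattice) are accurate.
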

\begin{proposition}
\label{Proposition 3.1.}
For $f,g\in  L_{\alpha, q}^2(\mathbb{R}_{q,+}),$ the generalized $q^2$-cosine and $q^2$-sine transforms pair  hold true.
\bea 
\label{seds}
g(q^n)=2  C_{\alpha,q} \int_{0}^{\infty}\left\{\begin{array}{c}
  \cos_\alpha((1-q)xq^n;q^2)\\
\mbox{ or }\\
 \sin_\alpha((1-q)xq^n;q^2)\end{array}\right\}  f(x)x^{2\alpha+1}\,d_qx   
\eea
and 
\bea  
\label{sedss}
f(q^k)=2  C_{\alpha,q}   \int_{0}^{\infty} \left\{\begin{array}{c}
  \cos_\alpha((1-q)tq^k;q^2)\\
\mbox{ or }\\
 \sin_\alpha((1-q)tq^k;q^2)\end{array}\right\} g(t)t^{2\alpha+1}\,d_qt.   
\eea
\end{proposition}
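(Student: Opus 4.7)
The plan is to establish the two displays of Proposition 3.1 simultaneously by proving that the cosine (resp. sine) transform is an involution on $L^2_{\alpha,q}(\mathbb{R}_{q,+})$-valued sequences indexed by $q^k$. I would denote
\[
\mathcal{F}_c f(y)=2C_{\alpha,q}\int_{0}^{\infty}\cos_\alpha((1-q)xy;q^2)\,f(x)\,x^{2\alpha+1}\,d_qx,
\]
and analogously $\mathcal{F}_s$ with $\sin_\alpha$. The goal is then to show $\mathcal{F}_c^{2}f(q^k)=f(q^k)$ and $\mathcal{F}_s^{2}f(q^k)=f(q^k)$ for every $k\in\mathbb{Z}$, which is exactly the content of (\ref{seds})--(\ref{sedss}).

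First I would substitute (\ref{seds}) into (\ref{sedss}), producing a double $q$-Jackson sum. Using the definition (\ref{eq:integral}) to unfold both integrals, the absolute summability guaranteed by $f\in L^2_{\alpha,q}(\mathbb{R}_{q,+})$ and the absolute convergence of $\cos_\alpha$, $\sin_\alpha$ (noted in the Remark after Definition~3.1) justifies an application of Fubini on the lattice $\mathbb{R}_{q,+}\times\mathbb{R}_{q,+}$. After interchanging, the identity to prove reduces to the discrete orthogonality
\[
4C_{\alpha,q}^{2}\int_{0}^{\infty}\cos_\alpha((1-q)tq^{k};q^2)\cos_\alpha((1-q)tq^{n};q^2)\,t^{2\alpha+1}\,d_qt
=\frac{\delta_{n,k}}{(1-q)\,q^{n(2\alpha+2)}},
\]
and its sine-counterpart, where $\delta_{n,k}$ is the Kronecker symbol.

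The heart of the argument, and the main obstacle, is therefore to establish this orthogonality relation for the generalized $q^2$-trigonometric functions on the geometric lattice $\mathbb{R}_{q,+}$. I would derive it from the known orthogonality of the $q$-Bessel functions $J_\nu(\cdot;q^2)$ of Koornwinder--Swarttouw on the lattice (cited in references [5,6,7]) after translating via the identifications
\[
\cos_\alpha(x;q^2)\;=\;c_\alpha\,x^{-\alpha-1/2}J_{\alpha}(x;q^2),\qquad
\sin_\alpha(x;q^2)\;=\;s_\alpha\,x^{-\alpha-1/2}J_{\alpha+1}(x;q^2),
\]
which follow from comparing (\ref{q-Expo-alpha})--(\ref{q-expoalpha}) with (\ref{Lag}). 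Plugging these in, the weight $t^{2\alpha+1}$ combines with $t^{-2\alpha-1}$ from the Bessel factorizations to yield the canonical measure $t\,d_qt$ for the $q$-Hankel orthogonality. The normalization constant $C_{\alpha,q}$ is precisely engineered so that the constant on the right is $1/[(1-q)q^{n(2\alpha+2)}]$, and a direct check using the infinite product defining $(q^{2\alpha+2};q^2)_\infty/(q^2;q^2)_\infty$ finishes that bookkeeping.

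Once the orthogonality is in hand, the two pairs in (\ref{seds})--(\ref{sedss}) follow symmetrically: the cosine pair from applying $\mathcal{F}_c$ twice to $f$ and extracting the coefficient at $x=q^k$, the sine pair from the same procedure with $\mathcal{F}_s$. Lemma~\ref{Lemma 3.1.} is used at the outset to justify replacing $\int_{-\infty}^{\infty}$ by $2\int_{0}^{\infty}$ in the definition of $\hat f$ when the integrand is even in $t$ (the case of $\cos_\alpha$, using that the product $\cos_\alpha\cdot f_e$ is even and $\cos_\alpha\cdot f_o$ is odd), and analogously for the sine. The only subtlety I would watch is tracking the sign convention in (\ref{expoalpha}) so that the cosine and sine pieces of the $q^2$-Fourier kernel decouple cleanly on even and odd $f$; modulo that, the argument is a direct reduction to the $q$-Hankel orthogonality of Koornwinder--Swarttouw.
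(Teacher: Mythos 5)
Your proposal is correct and takes essentially the same route as the paper: both arguments reduce the transform pair to the Koornwinder--Swarttouw discrete orthogonality (completeness) relation for the ${}_1\phi_1$ $q$-Bessel kernel on the lattice $\{q^k\}$, specialized via $q\to q^2$, $z\to q^{\alpha+1}$, followed by the rescaling $q^k\mapsto (1-q)^{1/2}q^k$ and the normalization bookkeeping through $C_{\alpha,q}$ --- the paper merely presents this in the opposite direction, starting from the $\delta_{nm}$ identity of [4] and rewriting it as the pair (\ref{seds})--(\ref{sedss}). The only slip is in your Bessel identification, where comparison of (\ref{q-Expo-alpha}) with (\ref{Lag}) (and with (\ref{qalpha}) at $\alpha=-1/2$) gives $\cos_\alpha(x;q^2)=\mathrm{const}\cdot x^{-\alpha}J_\alpha(x;q^2)$ rather than $x^{-\alpha-1/2}J_\alpha(x;q^2)$, a bookkeeping point that does not affect the structure of the argument.
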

\begin{proof} 
In order to prove   the Proposition \ref{Proposition 3.1.}, we will start with the relation [4]
\bea
\label{labelm}
\delta_{nm}=\sum_{k=-\infty}^\infty  z^{k+n}\frac{(z^2;q)_\infty}{(q;q)_\infty}  {}_1\phi_1\left(\begin{array}{c}0\\
z^2\end{array}\Big|q; q^{n+k+1}\right)  z^{k+m}\frac{(z^2;q)_\infty}{(q;q)_\infty}{ }_1\phi_1\left(\begin{array}{c}0\\
z^2\end{array}\Big|q; q^{m+k+1}\right), 
\eea
where   $|z|<1, \,n, \,m\in\mathbb{Z}$. \\
Substituting  $q$ by $q^2$ and  $z$ by  $q^{\alpha+ 1}$ into (\ref{labelm}) yields 
\bea
\label{resd}
\delta_{nm}=\sum_{k=-\infty}^\infty  
   q^{( \alpha+ 1)(k+n)}\frac{(q^{2\alpha+ 2};q^2)_\infty}{(q^2;q^2)_\infty}\cos_\alpha(q^{n+k};q^{2})  q^{( \alpha+ 1)(k+m)}\frac{(q^{2\alpha+ 2};q^2)_\infty}{(q^2;q^2)_\infty}\cos_\alpha(q^{m+k};q^{2}).  
\eea
Rewrite the identity (\ref{resd}) as the transform pair  
\bea
 g(q^n)= \sum_{k=-\infty}^\infty q^{(n+k)(\alpha+1 )}\frac{(q^{2\alpha+ 2};q^2)_\infty}{(q^2;q^2)_\infty} \cos_\alpha(q^{n+k};q^{2}) f(q^k)  \label{sresdd}
\eea 
and
\bea 
f(q^k)= \sum_{n=-\infty}^\infty q^{(n+k)(\alpha+1 )}\frac{(q^{2\alpha+ 2};q^2)_\infty}{(q^2;q^2)_\infty}  \cos_\alpha(q^{n+k};q^{2}) g(q^n),  \label{srfesdd} 
\eea 
where $f$ and  $g$   are $L_{\alpha,q}^2$ on the set $\{ q^{k}, 
 k\in\,\mathbb{Z}\}$ with respect to counting measure. 
Replacing  in   (\ref{sresdd})  $f( q^{k}), \,g(q^{n})$ by 
$q^{k(\alpha+1)}f( q^{k})$, $q^{n(\alpha+1)}g(q^{n}),$ respectively, we obtain:
\bea 
 g(q^n)= \sum_{k=-\infty}^\infty q^{k(2\alpha+2 )}\frac{(q^{2\alpha+ 2};q^2)_\infty}{(q^2;q^2)_\infty}  \cos_\alpha(q^{n+k};q^{2}) f(q^k).\label{resded}
\eea 
For such $q\in\,\{ q^{k},\; k\in\,\mathbb{Z}\}$, we can  replace $q^{k}, \,q^{n}$ in (\ref{resded}), by $(1-q)^{1/2}q^{k}$,
$ (1-q)^{1/2}q^{n}$. Then,
\bea 
 g((1-q)^{1/2}q^n)= \frac{(q^{2\alpha+ 2};q^2)_\infty}{(q^2;q^2)_\infty} \sum_{k=-\infty}^\infty    q^{k(2\alpha+2 )} \cos_\alpha((1-q)  q^{n+k};q^{2})  (1-q)^{\alpha+1} f((1-q)^{1/2}q^{k}).  
\eea 
Next,   replacing $f((1-q)^{1/2}q^{k})$ and  
$ g((1-q)^{1/2}q^{n})$  by $f( q^{k})$ and $g(q^{n})$, we get:
\bea 
\label{resdedd}
g(q^n)=(1-q)^{\alpha+1}\frac{(q^{2\alpha+ 2};q^2)_\infty}{(q^2;q^2)_\infty}
  \sum_{k=-\infty}^\infty   q^{ k(2\alpha+2 )}\cos_\alpha((1-q)q^{n+k};q^{2})   f(q^k). 
\eea 
With the $q$-integral notation (\ref{eq:integral}), the relation  (\ref{resdedd}) is equivalent to 
\bea  
g(t)=2   C_{\alpha,q}\int_{0}^{\infty}  \cos_\alpha((1-q)tx;q^2)  f(x)x^{2\alpha+1}\,d_qx.
\eea 
The proof of (\ref{seds}) is achieved.\\
Similarly, we can  prove (\ref{sedss}).  
\end{proof}
 
\begin{remark} 
For  $\alpha=-1/2$  and $q\uparrow 1$ in   assertions (\ref{seds}) and (\ref{sedss}), we get
the classical Fourier pair:
\bea
g(\lambda)=\sqrt{\frac{2}{\pi}} \int_{0}^{\infty} \left\{\begin{array}{c} \cos(x\lambda)\\\mbox{ or }\\
\sin(x \lambda)\end{array}\right\}f(x)\,dx,
\eea
and
\bea
f(x)=\sqrt{\frac{2}{\pi}} \int_{0}^{\infty} \left\{\begin{array}{c} \cos(x\lambda)\\\mbox{ or }\\
\sin(x \lambda)\end{array}\right\}g(\lambda)\,d\lambda. 
\eea
\end{remark}

\begin{lemma} 
For $f,g\in  L_{\alpha, q}^2(\mathbb{R}_{q,+}),$ the transformations $f\mapsto g$ and $g\mapsto f$ of (\ref{seds}) and (\ref{sedss}) establish and isometry of  Hilbert spaces: 
\bea 
\sum_{k=-\infty}^{+\infty}q^{k(2\alpha+2)}|f(q^k)|^2 
=\sum_{n=-\infty}^{+\infty}q^{n(2\alpha+2)}|g(q^n)|^2.
\eea
\end{lemma}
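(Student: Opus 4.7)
The strategy is to reduce the claim directly to the cosine-transform pair established in Proposition \ref{Proposition 3.1.}. Rewriting (\ref{seds})--(\ref{sedss}) in their $q$-lattice form (as in (\ref{resdedd}) and its $g\mapsto f$ analogue) gives
\[g(q^n)=A\sum_{k\in\mathbb{Z}}q^{k(2\alpha+2)}\cos_\alpha((1-q)q^{n+k};q^2)\,f(q^k),\]
\[f(q^k)=A\sum_{n\in\mathbb{Z}}q^{n(2\alpha+2)}\cos_\alpha((1-q)q^{n+k};q^2)\,g(q^n),\]
where $A=(1-q)^{\alpha+1}(q^{2\alpha+2};q^2)_\infty/(q^2;q^2)_\infty$. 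The kernel is real-valued and symmetric in $(n,k)$, so the pair is essentially a self-inverse linear map on weighted $\ell^{2}(\mathbb{Z})$.

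First I would expand
\[\sum_{n}q^{n(2\alpha+2)}|g(q^n)|^{2}=\sum_{n}q^{n(2\alpha+2)}\,g(q^n)\,\overline{g(q^n)}\]
and replace $\overline{g(q^n)}$ by the complex conjugate of the forward formula; the real kernel is unaffected, yielding a double sum in $n$ and $k$. Then I would interchange the order of summation, after which the inner sum is exactly $A\sum_{n}q^{n(2\alpha+2)}\cos_\alpha((1-q)q^{n+k};q^2)\,g(q^n)$, which the inverse formula recognises as $f(q^k)$. Collecting the remaining weight $q^{k(2\alpha+2)}$ produces $\sum_{k}q^{k(2\alpha+2)}|f(q^k)|^{2}$, as required. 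The sine case is identical with $\sin_\alpha$ in place of $\cos_\alpha$.

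The main obstacle is justifying the interchange of the two summations for arbitrary $f,g\in L_{\alpha,q}^{2}(\mathbb{R}_{q,+})$. The cleanest route is a density argument: verify the identity for finitely supported sequences, where Fubini is automatic, and then extend by continuity of the transform in $L^{2}$. Alternatively, one can derive the orthogonality relation
\[A^{2}\sum_{n\in\mathbb{Z}}q^{n(2\alpha+2)}\cos_\alpha((1-q)q^{n+k};q^2)\cos_\alpha((1-q)q^{n+j};q^2)=q^{-k(2\alpha+2)}\delta_{jk}\]
directly from (\ref{labelm}) by repeating the $(1-q)^{1/2}$-rescaling carried out in the proof of Proposition \ref{Proposition 3.1.}; inserting this relation into $\sum_{n}q^{n(2\alpha+2)}|g(q^n)|^{2}$ collapses the double sum via the Kronecker delta and bypasses any convergence subtlety.
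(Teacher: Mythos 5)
Your proposal is correct, and in fact the paper states this lemma without any proof at all, so your argument supplies what the paper leaves implicit. The route you take is the natural completion of the paper's own setup: the identity (\ref{labelm}), after the substitutions and $(1-q)^{1/2}$-rescaling already carried out in the proof of Proposition \ref{Proposition 3.1.} (legitimate because condition (\ref{q}) makes $(1-q)^{1/2}$ an integer power of $q$, so the lattice is preserved), becomes exactly your orthogonality relation
\[
A^{2}\sum_{n\in\mathbb{Z}}q^{n(2\alpha+2)}\cos_\alpha((1-q)q^{n+k};q^2)\cos_\alpha((1-q)q^{n+j};q^2)=q^{-k(2\alpha+2)}\delta_{jk},
\]
and inserting it into $\sum_n q^{n(2\alpha+2)}|g(q^n)|^2$ collapses the double sum to $\sum_k q^{k(2\alpha+2)}|f(q^k)|^2$. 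Of your two suggested routes, the orthogonality/Kronecker-delta one is preferable for exactly the reason you give: for general $L^2_{\alpha,q}$ data the interchange of summations in the ``substitute and recognise the inverse'' version needs justification, whereas the delta relation (first verified for finitely supported sequences, then extended by density --- which is precisely the extension mechanism the paper invokes in the sentence following the lemma) avoids the issue. One small imprecision: the kernel $q^{k(2\alpha+2)}\cos_\alpha((1-q)q^{n+k};q^2)$ is not literally symmetric in $(n,k)$ because of the one-sided weight; what is true is that the function $\cos_\alpha((1-q)q^{n+k};q^2)$ depends only on $n+k$, so the operator is self-adjoint with respect to the weighted $\ell^2$ inner product, which is all your argument actually uses.
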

Let us now turn to the $L^2$ theory of the generalized $q^2$-Fourier Transform.    Since the generalized $q^2$-Fourier Transform is defined and bounded on $ \big(L_{\alpha,q}^1\cap L_{\alpha,q}^2\big)(d_qt)$  (dense in  $ L_{\alpha,q}^2(d_qt)$ for the functions with finite support), it defines a bounded extension to all of $ L_{\alpha,q}^2(d_qt)$.  We can use Lemma \ref{Lemma 3.1.} and Proposition \ref{Proposition 3.1.} to prove the following theorem.  
\begin{Theorem}
$f\in \big(L_{\alpha,q}^1\cap L_{\alpha,q}^2\big)(d_qt)$ implies 
\bea
f(x)=  C_{\alpha,q}\int_{-\infty}^{\infty}\hat{f}(t;q^2)\;e_{\alpha}(i(1-q)t x;q^2)
  |t|^{2\alpha+1}\,d_qt,\,\forall x\in\mathbb{R}_{q,+}. 
\eea
\end{Theorem}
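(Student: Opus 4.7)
The plan is to split $f$ into its even and odd parts and to reduce the inversion identity to the cosine/sine transform pairs already established in Proposition \ref{Proposition 3.1.}, exploiting Lemma \ref{Lemma 3.1.} to kill the mixed parity contributions. Write $f = f_e + f_o$ with $f_e(t) = (f(t)+f(-t))/2$ and $f_o(t) = (f(t)-f(-t))/2$, and recall from definition (\ref{expoalpha}) that
\begin{equation*}
e_\alpha(\mp i(1-q)tx; q^2) = \cos_\alpha((1-q)tx;q^2) \mp i\, \sin_\alpha((1-q)tx;q^2),
\end{equation*}
since $\cos_\alpha$ is even and $\sin_\alpha$ is odd in its first argument.

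First, I would insert this expansion into the definition of $\hat f$ and apply Lemma \ref{Lemma 3.1.} to the four resulting terms. The cross-parity integrands $f_e(t)\sin_\alpha(\cdots)|t|^{2\alpha+1}$ and $f_o(t)\cos_\alpha(\cdots)|t|^{2\alpha+1}$ are odd in $t$ and integrate to zero, while the other two reduce to twice the integrals on $(0,\infty)$. This yields $\hat f(x;q^2) = g_c(x) - i\, g_s(x)$, where
\begin{equation*}
g_c(x) = 2C_{\alpha,q}\!\int_0^\infty\! f_e(t)\cos_\alpha((1-q)tx;q^2)\, t^{2\alpha+1} d_qt,\quad g_s(x) = 2C_{\alpha,q}\!\int_0^\infty\! f_o(t)\sin_\alpha((1-q)tx;q^2)\, t^{2\alpha+1} d_qt.
\end{equation*}
Note that $g_c$ is even and $g_s$ is odd in $x$, a fact I will need in the next step.

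Next, I would substitute $\hat f(t;q^2)=g_c(t)-ig_s(t)$ and $e_\alpha(i(1-q)tx;q^2)=\cos_\alpha((1-q)tx;q^2)+i\sin_\alpha((1-q)tx;q^2)$ into the right-hand side of the theorem. Expanding the product produces four integrands; using the parities of $g_c, g_s, \cos_\alpha, \sin_\alpha$ together with Lemma \ref{Lemma 3.1.}, the two imaginary cross terms vanish and the remaining integrals fold to $(0,\infty)$, giving
\begin{equation*}
\text{RHS} = 2C_{\alpha,q}\!\int_0^\infty\! g_c(t)\cos_\alpha((1-q)tx;q^2)\, t^{2\alpha+1} d_qt + 2C_{\alpha,q}\!\int_0^\infty\! g_s(t)\sin_\alpha((1-q)tx;q^2)\, t^{2\alpha+1} d_qt.
\end{equation*}
Proposition \ref{Proposition 3.1.} identifies the first integral with $f_e(x)$ and the second with $f_o(x)$ for all $x \in \mathbb{R}_{q,+}$, so the sum equals $f(x)$.

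The proof is essentially a bookkeeping exercise, so the only real obstacle is keeping the parities and the $\pm i$ signs aligned through the two expansions; in particular, one must justify that the four-fold expansion can be separated (absolute convergence on $L^1_{\alpha,q}\cap L^2_{\alpha,q}$) and that $g_c, g_s$ so produced lie in the space where Proposition \ref{Proposition 3.1.} applies. Once those hypotheses are checked — which follows from boundedness of the generalized $q^2$-Fourier transform on $L^1_{\alpha,q}\cap L^2_{\alpha,q}$ noted just before the theorem statement — the rest is pure symbol pushing.
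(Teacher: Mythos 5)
Your proposal is correct and follows exactly the route the paper intends: the paper gives no written proof beyond the remark that Lemma \ref{Lemma 3.1.} and Proposition \ref{Proposition 3.1.} suffice, and your even/odd decomposition with the parity lemma killing the cross terms and the cosine/sine transform pairs inverting each piece is precisely that argument, carried out with the signs handled correctly.
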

\begin{Theorem}
Let $f$ be  the functions with finite support in  $ L_{\alpha,q}^2(d_q x)$. 
$f\in   L_{\alpha,q}^2(d_q x)$ implies  
\be
\|f\|_{q,\alpha,2}=\|\hat{f}(\cdot;q^2)\|_{q,\alpha,2}.
\ee
\end{Theorem}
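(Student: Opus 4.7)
The plan is to reduce this Plancherel identity to the Hilbert-space isometry (the Lemma immediately preceding the theorem) by splitting $f$ into its even and odd parts. Write $f=f_e+f_o$ with $f_e(x)=\tfrac{1}{2}(f(x)+f(-x))$ and $f_o(x)=\tfrac{1}{2}(f(x)-f(-x))$; both remain finitely supported and in $L_{\alpha,q}^2$. Because $\cos_\alpha(\,\cdot\,;q^2)$ is even and $\sin_\alpha(\,\cdot\,;q^2)$ is odd in the spatial variable (only even, respectively odd, powers of $x$ appear in the series (\ref{q-Expo-alpha}) and (\ref{q-expoalpha})), Lemma \ref{Lemma 3.1.} kills the two cross terms when the kernel $e_\alpha(-i(1-q)tx;q^2)=\cos_\alpha((1-q)tx;q^2)-i\sin_\alpha((1-q)tx;q^2)$ is expanded inside $\hat f$. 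Denoting the surviving pieces
\[
g_e(x)=2C_{\alpha,q}\int_0^\infty f_e(t)\cos_\alpha((1-q)tx;q^2)\,t^{2\alpha+1}\,d_qt,\qquad
g_o(x)=2C_{\alpha,q}\int_0^\infty f_o(t)\sin_\alpha((1-q)tx;q^2)\,t^{2\alpha+1}\,d_qt,
\]
one obtains $\hat f=g_e-ig_o$, where $g_e$ is even and $g_o$ is odd in $x$, so these are precisely the even and odd parts of $\hat f$ (up to the factor $-i$).

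Next, the cosine/sine transform pairs (\ref{seds})--(\ref{sedss}) from Proposition \ref{Proposition 3.1.} identify $(f_e,g_e)$ and $(f_o,g_o)$ as pairs to which the Lemma on isometry between the weighted $\ell^2$ spaces on $\{q^k:k\in\mathbb{Z}\}$ applies. That Lemma gives
\[
\sum_{k\in\mathbb{Z}}q^{k(2\alpha+2)}|f_e(q^k)|^2=\sum_{n\in\mathbb{Z}}q^{n(2\alpha+2)}|g_e(q^n)|^2,
\]
and the analogous identity with $f_o,g_o$. Translating back via the Jackson integral (\ref{eq:integral}), each side equals $\int_0^\infty|\,\cdot\,|^2 x^{2\alpha+1}\,d_qx$, and another application of Lemma \ref{Lemma 3.1.} to the even functions $|f_e|^2,|g_e|^2,|f_o|^2,|g_o|^2$ doubles the positive-axis integral into a full $\int_{-\infty}^{\infty}|\,\cdot\,|^2|x|^{2\alpha+1}\,d_qx$, yielding $\|f_e\|_{q,\alpha,2}=\|g_e\|_{q,\alpha,2}$ and $\|f_o\|_{q,\alpha,2}=\|g_o\|_{q,\alpha,2}$.

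To close the argument, I would invoke Lemma \ref{Lemma 3.1.} a third time to note that the cross term $\int f_e\overline{f_o}|x|^{2\alpha+1}\,d_qx$ vanishes (integrand is odd), so Pythagoras gives $\|f\|_{q,\alpha,2}^2=\|f_e\|_{q,\alpha,2}^2+\|f_o\|_{q,\alpha,2}^2$; the same decomposition applies to $\hat f=g_e-ig_o$ since $|{-i}|=1$. Adding the two half-Plancherel identities gives $\|f\|_{q,\alpha,2}=\|\hat f(\cdot;q^2)\|_{q,\alpha,2}$, and extension from finitely supported functions to all of $L_{\alpha,q}^2$ is by density, exactly as noted in the paragraph preceding the statement.

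The main obstacle I foresee is purely bookkeeping: matching the constant $C_{\alpha,q}$ and the factor of $2$ produced by Lemma \ref{Lemma 3.1.} against the normalizations built into (\ref{seds})--(\ref{sedss}), and verifying that the factor $(-i)$ carried by the odd part genuinely drops out under $|\cdot|^2$ so that no phase residue pollutes the Plancherel identity. Once the parities of $\cos_\alpha$, $\sin_\alpha$, $f_e$, $f_o$ are tracked carefully, the rest is a direct assembly of the three preceding results.
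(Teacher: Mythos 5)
Your proof is correct and follows exactly the route the paper intends: the paper states this Plancherel theorem without writing out a proof, but its stated ingredients (the parity lemma, the cosine/sine transform pairs of Proposition 3.1, and the weighted $\ell^{2}$ isometry lemma) are precisely what you assemble, via the even/odd decomposition $f=f_{e}+f_{o}$, $\hat f=g_{e}-ig_{o}$, and Pythagoras. The bookkeeping you flag does work out: each weighted sum $\sum_{k}q^{k(2\alpha+2)}|\cdot(q^{k})|^{2}$ equals $\tfrac{1}{2(1-q)}\|\cdot\|_{q,\alpha,2}^{2}$ on both sides of the isometry identity, so the constants cancel and no phase from $-i$ survives.
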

\section{Conclusions}

In our present investigation, we have constructed a pair of potentially   generalized $q^2$-cosine,  $q^2$-sine and $q^2$-exponential functions.  We then have successfully used $e_\alpha(x;q^2)$  to define and investigate  generalized $q^2$-Fourier transform.  In particular, we have establihed $q$-analogues of   inversion   and  Plancherel theorems. \ 
{}
	
\end{document}